\documentclass[runningheads]{llncs}
\usepackage{graphicx}
\usepackage{listings}
\usepackage[english]{babel}
\usepackage[utf8x]{inputenc}
\usepackage{amsmath}
\usepackage{graphicx}
\usepackage[colorinlistoftodos]{todonotes}
\usepackage{amsmath,amssymb} 
\usepackage{array,multirow}
\usepackage{makecell}
\usepackage{array,rotating}
\usepackage{amssymb,diagbox}
\usepackage{tikz}
\usepackage{rotating}
\usetikzlibrary{automata,positioning}
\usepackage{caption}
\usepackage{multicol}
\usepackage{float}
\usepackage{chngpage}
\usepackage{url}
\usetikzlibrary{decorations.pathmorphing}
\tikzset{snake it/.style={decorate, decoration=snake}}

\usepackage{hyperref}

\newtheorem{thm}{Theorem}
\newtheorem{df}[thm]{Definition}

\newtheorem{rem}[thm]{Remark}

\usepackage[all]{xy}
\usepackage{float}
\usepackage{wrapfig}

\definecolor{gray}{rgb}{0.4,0.4,0.4}
\definecolor{white}{rgb}{1,1,1}
\definecolor{red}{rgb}{0.8,0,0}

\newcommand{\bbn}{\mathbb N}


\begin{document}
	\title{Planar digraphs for automatic complexity\thanks{
			This work was partially supported by
			a grant from the Simons Foundation (\#315188 to Bj\o rn Kjos-Hanssen).
			We are indebted to Jeff Shallit and Malik Younsi for helpful comments.
		}
	}
	\author{
		Achilles A.~Beros \and
		Bj{\o}rn Kjos-Hanssen\orcidID{0000-0002-6199-1755} \and
		Daylan Kaui Yogi
	}
	\institute{
		University of Hawai`i at M{\=a}noa, Honolulu HI 96822, USA
		\email{\{beros,bjoernkh,dkyogi64\}@hawaii.edu}
	}
	\authorrunning{A.~Beros, B.~Kjos-Hanssen et al.}
	\maketitle
	\begin{abstract}
		We show that
		the digraph of a nondeterministic finite automaton witnessing the automatic complexity of a word
		can always be taken to be planar.
		In the case of total transition functions studied by Shallit and Wang, planarity can fail.

		Let $s_q(n)$ be the number of binary words $x$ of length $n$ having nondeterministic automatic complexity $A_N(x)=q$.
		We show that $s_q$ is eventually constant for each $q$ and that the eventual constant value of $s_q$ is computable.

		\keywords{Automatic complexity \and planar graph \and M\"obius function \and Nondeterministic finite automata.}
	\end{abstract}

	\section{Introduction}
		Automatic complexity, introduced by Shallit and Wang \cite{MR1897300}, is an automata-based and length-conditional analogue of
		Sipser's CD complexity \cite{Sipser:1983:CTA:800061.808762} which is in turn a computable analogue of the noncomputable
		Kolmogorov complexity.
		The nondeterministic case was taken up by Hyde and Kjos-Hanssen \cite{MR3386523},
		who gave a table of the number of words of length $n$ of a given complexity $q$ for $n\le 23$.
		The numbers in the table suggested (see Table \ref{tab:prop}) that the number may be eventually constant for each fixed $q$.
		Here we establish that that is the case (Theorem \ref{thm:total}), and show that the limit is computable (in exponential time).
		Moreover, we narrow down the possible automata that are needed to witness nondeterministic automatic complexity:
		they must have planar digraphs, in fact their digraphs are trees of cycles in a certain sense.

		We recall our basic notion.
		 \begin{df}[{\cite{MR1897300}}]
			The \textbf{nondeterministic automatic complexity} $A_{N}(x)$ of a word $x$ is
			the minimal number of states of a nondeterministic finite automaton $M$ (without $\epsilon$-transitions) accepting $x$ such that
			there is only one accepting path in $M$ of length $|x|$.
		 \end{df}

	\section{Automatic complexity as chains of trees of lumps}

		Consider the version of automatic complexity where the transition functions are not required to be total.\footnote{
			Whether determinism is required is not important in the following,
			but in the nondeterministic case we assume we require there to be only one accepting path, as usual.
		}
		Then we claim that the digraphs representing the witnessing automata are planar, in fact they are ``trees of cycles''.
		As an example, for the word $0^510^51^6010^3$, we have the following witnessing automaton:
		\[
			\begin{tikzpicture}[shorten >=1pt,node distance=1.5cm,on grid,auto]
				\node[state,initial] (q_0) {$q_0$}; 
				\node[state] (q_1) [below=of q_0] {$q_1$};
				\node[state] (q_2) [below right=of q_1] {$q_2$};
				\node[state,accepting] (q_3) [above right=of q_2] {$q_3$};
				\node[state] (q_4) [above=of q_3] {$q_4$};
				\node[state] (q_5) [above left=of q_4] {$q_5$};
				\node[state] (q_6) [above=of q_5] {$q_6$};
				\node[state] (q_7) [above right=of q_6] {$q_7$};
				\node[state] (q_8) [below right=of q_7] {$q_8$};
				\node[state] (q_9) [below=of q_8] {$q_9$};
				\node[state] (q_10) [above left=of q_6] {$q_{10}$};
				\path[->] 
					(q_0) edge [swap] node {$0$} (q_1)
					(q_1) edge [swap] node {$0$} (q_2)
					(q_2) edge [swap] node {$0$} (q_3)
					(q_3) edge [swap] node {$0$} (q_4)
					(q_4) edge [swap] node {$0$} (q_5)
					(q_5) edge [swap] node {$1$} (q_6)
					(q_6) edge [swap] node {$0$} (q_7)
					(q_7) edge [swap] node {$0$} (q_10)
					(q_10)edge [swap] node {$0$} (q_6)
					(q_7) edge node {$0$} (q_8)
					(q_5) edge node {$1$} (q_0)
					(q_8) edge [swap] node {$1$} (q_9)
					(q_9) edge [bend right,swap] node {$1$} (q_8)
					(q_9) edge node {$1$} (q_4)
					;
			\end{tikzpicture}
		\]
		To explain this, first let us say that a cycle is a sequence of states that starts and ends with the same state.
		Let us say that a lump is the automaton whose transitions come from a given cycle.
		So if a cycle is repetitive, like 3456734567345673, then it generates the same lump as just 345673.

		Consider the sequence of states visited during processing of a unique accepted word x of length n.
		Let us call the first visited state 0, the next distinct state 1, and so on.
		(So for example the permitted state sequences of length 3 are only 000, 001, 010, 011, 012.)

		Then the state sequence starts $0,1,\dots,q,q+1,\dots,q$ where $q$ is the first state that is visited twice.
		Now the claim is that there will never, at a later point in the state sequence, be a transition (an edge) $q_1$,$q_2$ such that
		$q_2$ occurs within the lump generated by the cycle $q,q+1,\dots,q$ and such that
		the transition $q_1,q_2$ does not occur in that lump. Indeed, otherwise our state sequence would start
		\[
		0,1,\dots,\underbrace{q,\dots,q_2}_{\text{first}},\dots,\underbrace{q,\dots,q_1,q_2}_{\text{second}}
		\]
		and then there is a second accepting path of the same length where the first and second segments are switched.

		Consequently, the path can only return to states that are not yet in any lumps.
		This leaves only two choices whenever we decide to create a new edge leading to a previously visited state:

		Case 1. Go back to a state that was first visited after the last completed lump so far seen, or
		Case 2. Go back to a state that was first visited at some earlier time, before some of the lumps so far seen started
		(and in general after some of them were complete).

		This gives a tree of lumps where each new lump either
		(Case 1) creates a new sibling for the previous lump, or
		(Case 2) creates a new parent for a final segment of the so far seen top-level siblings.
		In this tree of lumps, only the leaves (the lumps that are not anybody's parents)
		can be traversed more than once by the uniquely accepted path of length n.

		So if the first lump created is $l_1$ then next we can have two cases:
		\[
		(l_1,\quad l_2) \tag{Case 1}
		\]
		\[
			l_1\to l_2\tag{Case 2}
		\]
		In Case 1, $l_1$ and $l_2$ are siblings ordered from first to second.
		In Case 2, $\to$ denotes \emph{is a child of}, which by definition is the same as \emph{sub-digraph}.
		Now for the third lump $l_3$, we have only the following possibilities:
		\[
			(l_1, \quad l_2,\quad l_3)\tag{Subcase 1.1}
		\]
		\[
			(l_1,\quad l_2\to l_3)\tag{Subcase 1.2}
		\]
		\[
		(l_1, l_2)\to l_3\tag{Subcase 1.3}
		\]
		\[
		(l_1\to l_2,\quad l_3)\tag{Subcase 2.1}
		\]
		\[
		l_1\to l_2\to l_3\tag{Subcase 2.2}
		\]
		In Subcase 1.2, $l_1$ and $l_3$ are siblings and $l_2$ is a child of $l_3$.
		In Subcase 1.3, $l_3$ is a common parent of $l_1$ and $l_2$.
		In Subcase 2.1, $l_3$ is a new sibling for $l_2$, and $l_2$ still has $l_1$ as its child.
		In Subcase 2.2, $l_3$ is a parent of $l_2$.

		For instance, the state sequence 01234567345673456720 has the structure of Subcase 2.2, with
		$l_1$ being the lump generated from 345673,
		$l_2$ being generated from 23456734567345672, and
		$l_3$ being generated from the whole sequence 01234567345673456720. The corresponding automaton is shown in an online tool.\footnote{ 
		\url{http://math.hawaii.edu/wordpress/bjoern/complexity-of-0001111011110111111/}
		}
		Using this planarity result, we are able to increase the speed of our algorithm for calculating $A_N(x)$.
		Consequently, we have been able to extend the string length in our computations from $n=23$ to $n=25$.
		The number of maximally complex binary words of a given length are shown in Table \ref{max}.
		A similar table for $n\le 23$ was given in \cite{MR3386523}.
		\begin{table}
			\centering
			\begin{tabular}{r|r|r|r|r}
				$n$&\#&			$2^n$&		\%complex&	$2^n$-\#\\
				\hline
				0&	1&			1&			100.00\%&	0\\
				1&	2&			2&			100.00\%&	0\\
				\hline
				2&	2&			4&			50.00\%&	2\\
				3&	6&			8&			75.00\%&	2\\
				\hline
				4&	8&			16&			50.00\%&	8\\
				5&	24&			32&			75.00\%&	8\\
				\hline
				6&	30&			64&			46.88\%&	34\\
				7&	98&			128&		76.56\%&	30\\
				\hline
				8&	98&			256&		38.28\%&	158\\
				9&	406&		512&		79.30\%&	106\\
				\hline
				10&	344&		1,024&		33.59\%&	680\\
				11&	1,398&		2,048&		68.26\%&	650\\
				\hline
				12&	1,638&		4,096&		39.99\%&	2,458\\
				13&	5,774&		8,192&		70.48\%&	2,418\\
				\hline
				14&	5,116&		16,384&		31.23\%&	11,268\\
				15&	23,018&		32,768&		70.25\%&	9,750\\
				\hline
				16&	22,476&		65,536&		34.30\%&	43,060\\
				17&	86,128&		131,072&	65.71\%&	44,944\\
				\hline
				18&	89,566&		262,144&	34.17\%&	172,578\\
				19&	351,250&	524,288&	67.00\%&	173,038\\
				\hline
				20&	375,710&	1,048,576&	35.83\%&	672,866\\
				21&	1,461,670&	2,097,152&	69.70\%&	635,482\\
				\hline
				22&	1,539,164&	4,194,304&	36.70\%&	2,655,140\\
				23&	5,687,234&	8,388,608&	67.80\%&	2,701,374\\
				\hline
				24&	6,814,782&	16,777,216&	40.62\%&	9,962,434\\
				25&	24,031,676&	33,554,432&	71.62\%&	9,522,756\\
				\hline
				26&	27,782,964&	67,108,864&	41.40\%&	39,325,900\\
				27& 97,974,668&	134,217,728&73.00\%&	36,243,060\\
				\hline
			\end{tabular}
			\caption{Lengths $n$,
				number of words of length $n$ of maximal $A_N(x)$,
				$2^n$,
				percentage of maximally complex words,
				number of non-maximally complex words.
			}\label{max}
		\end{table}
	
	\section{The asymptotic number of words of given complexity}
	In this section, we examine the asymptotic behavior of the number of words with automatic complexity $q$ for a fixed $q \in \bbn$. 
	\begin{df}
			A binary word $x$ is \emph{right inextendible} if $A_{N}(x)<A_{N}(x0)$ and $A_{N}(x)<A_{N}(x1)$.
		\end{df}
		Inextendibility is closely related to volatility of the automatic complexity, as examined in the Complexity Option Game \cite{COG}.
		The number and proportion of right-inextendible words of length $n$ and complexity $q$ can be examined using
		an online database \cite{lookup} and is shown in Table \ref{tab:prop} for small $q$ and $n$.

		A basic procedure in our results will be the counting of periodic words, since a cycle containing a periodic word can be shortened and
		an automaton containing such a cycle will not be optimal.
		\begin{df}
			A word $x$ is \emph{periodic} if there exists a subword $y\not=x$ and an integer $n$ such that
			\[
				\underbrace{yyy\cdots y}_n=x.
			\]
		\end{df}
		A non-periodic word \cite{mrx05} is also called a primitive word and one starting with 0, in our setting, is called a \emph{Lyndon word} \cite{MR0064049}.
		\begin{df}[{\cite{MR0376360}}]
			Let $n$ be a positive integer with
			$\omega(n)$ denoting the number of distinct prime factors of $n$ and
			$\Omega(n)$ denoting the total number of prime factors (i.e., with repetition) of $n$.
			The M\"obius function $\mu$ is defined as
			\[
				\mu(n):=\begin{cases}
					          (-1)^{\omega(n)\bmod 2} &\text{ if }\Omega(n)=\omega(n),\\
					\phantom{+}0 &\text{ if }\Omega(n)>\omega(n).
				\end{cases}
			\]
		\end{df}
		\begin{thm}[{\cite{mrx05}}]\label{thm:upbs}
			The number of unique periodic binary words of length $n$ is given by $Z(0)=0$ and for $n\ge 1$,
			\[
				Z(n)= 2^{n}-\sum_{d|n}\mu\left(\frac{n}{d}\right)\cdot 2^{d}.
			\]
		\end{thm}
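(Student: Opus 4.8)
The plan is to count the complement. Let $M(n)$ denote the number of \emph{primitive} binary words of length $n$; by the remark preceding the theorem, primitive means non-periodic, so since there are $2^n$ binary words of length $n$ in total we have $Z(n)=2^n-M(n)$, and it suffices to prove $M(n)=\sum_{d\mid n}\mu(n/d)\,2^d$ for $n\ge1$.

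The one genuinely non-routine ingredient is the classical \emph{primitive root} lemma: every nonempty word $w$ can be written uniquely as $w=u^k$ with $u$ primitive and $k\ge1$, and then $|u|$ divides $|w|$. Existence and the divisibility are immediate by taking $u$ of minimal length with $w=u^k$, since minimality forces $u$ itself to be primitive. For uniqueness one must show that $u^j=v^k$ with $u,v$ both primitive implies $u=v$; I would derive this from the Fine--Wilf periodicity theorem --- the common word has periods $|u|$ and $|v|$, hence period $\gcd(|u|,|v|)$, which by primitivity forces $|u|=|v|=\gcd(|u|,|v|)$ and then $u=v$ --- or invoke the equivalent Lyndon--Schützenberger fact that $x^p=y^q$ implies $x$ and $y$ are powers of a common word. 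This uniqueness is the step I expect to be the main obstacle, being the only place that needs a combinatorics-on-words argument rather than bookkeeping.

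Granting the primitive root lemma, the count is a standard sieve. Sorting the $2^n$ binary words of length $n$ by the length $d$ of their primitive root, we get $d\mid n$, and for each divisor $d$ of $n$ the words of length $n$ with primitive root of length $d$ are precisely $u^{n/d}$ for $u$ among the $M(d)$ primitive words of length $d$; hence
\[
2^n=\sum_{d\mid n}M(d).
\]
M\"obius inversion (applicable since this holds for all $n\ge1$) gives $M(n)=\sum_{d\mid n}\mu(n/d)\,2^d$, so
\[
Z(n)=2^n-\sum_{d\mid n}\mu\!\left(\frac{n}{d}\right)2^d
\]
for $n\ge1$. The value $Z(0)=0$ holds by convention, there being no periodic word of length $0$; as a check, $Z(1)=2-\mu(1)\cdot2=0$, consistent with both one-letter words being primitive.
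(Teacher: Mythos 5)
Your argument is correct, and it is the standard direct proof: decompose each of the $2^n$ words by its primitive root, get $2^n=\sum_{d\mid n}M(d)$, and apply M\"obius inversion; the only nontrivial ingredient is the uniqueness of the primitive root, which you rightly isolate and correctly reduce to Fine--Wilf (or Lyndon--Sch\"utzenberger), with the small case split $k=1$ versus $k\ge 2$ needed so that the length hypothesis of Fine--Wilf is met. The paper itself does not prove the statement at all: it cites it from the literature and instead remarks that it is a restatement of Witt's formula, i.e.\ that the number of aperiodic necklaces of length $n$ is $\frac1n\sum_{d\mid n}\mu\left(\frac{n}{d}\right)2^d$, from which the count of non-periodic words follows because an aperiodic word has exactly $n$ distinct cyclic rotations, and then $Z(n)=2^n-(\text{number of non-periodic words})$. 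So the two routes trade one combinatorics-on-words fact for another: your inversion argument needs uniqueness of the primitive root, while the necklace route needs that aperiodicity forces the rotation orbit to have full size $n$ (itself a cousin of the same periodicity lemma). Your version is self-contained and arguably more elementary, since it never introduces the group action on necklaces; the paper's version buys brevity by outsourcing the content to Witt's classical theorem. Your boundary checks ($Z(0)=0$ by convention, $Z(1)=0$) are consistent with the paper's definition of periodicity as being a proper power.
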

		Recall that a \emph{necklace} is an equivalence class of non-periodic words under cyclic rotation.
		Thus, for instance, $\{0011,0110,1100,1001\}$ is a necklace.
		Theorem \ref{thm:upbs} is a restatement of the following classical result.
		\begin{thm}[Witt's Formula \cite{MR1581553}]
			The number of necklaces of binary words of length $n$ is
			\[
				\frac1n \sum_{d|n}\mu\left(\frac{n}{d}\right)\cdot 2^{d}.
			\]
		\end{thm}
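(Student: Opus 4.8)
The plan is to count the primitive binary words of length $n$ and then divide by the common size of the rotation classes. Write $M(n)$ for the number of necklaces of length $n$ (equivalence classes of primitive words of length $n$ under cyclic rotation) and $P(n)$ for the number of primitive binary words of length $n$. The first step is to check that every rotation class of a primitive word of length $n$ has exactly $n$ elements: a cyclic shift by $k$ positions that fixes a length-$n$ word $w$ forces $w$ to have period $\gcd(k,n)$, and when $\gcd(k,n)<n$ this makes $w$ a proper power, contradicting primitivity; conversely the $n$ shifts of a primitive word are pairwise distinct. Hence $P(n)=n\,M(n)$.

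The second step is the unique primitive root property: every binary word $w$ of length $n$ can be written as $u^{n/d}$ for a unique primitive word $u$, necessarily of length $d\mid n$. Existence is immediate by taking $u$ to be a shortest word with $w\in u^{+}$. For uniqueness, suppose $w=u^{a}=v^{b}$ with $u,v$ primitive and $|u|\le|v|$; the case $a=1$ is trivial, so assume $a,b\ge 2$, whence $|u|+|v|\le n$. By the Fine--Wilf theorem $w$ then has period $g:=\gcd(|u|,|v|)$; since $g\mid|u|$ and the length-$|u|$ prefix of $w$ equals $u$, the primitivity of $u$ gives $g=|u|$, so $|u|\mid|v|$, and then the primitivity of $v$ gives $|v|=|u|$ and finally $u=v$. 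Classifying binary words by their primitive root now yields
\[
2^{n}=\sum_{d\mid n}P(d)=\sum_{d\mid n}d\,M(d).
\]

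The last step is M\"obius inversion of this divisor-sum identity: from $2^{n}=\sum_{d\mid n}d\,M(d)$ one gets
\[
n\,M(n)=\sum_{d\mid n}\mu\!\left(\frac{n}{d}\right)2^{d},
\]
and dividing by $n$ gives the stated formula. (Equivalently, $P(n)=\sum_{d\mid n}\mu(n/d)2^{d}=2^{n}-Z(n)$ with $Z$ as in Theorem~\ref{thm:upbs}, so Witt's formula is a repackaging of that theorem, as the text notes.) The one genuinely delicate ingredient is the uniqueness of the primitive root, for which the Fine--Wilf periodicity lemma does the work; the rotation-class count and the M\"obius inversion are routine, the latter being immediate from the definition of $\mu$ recalled above.
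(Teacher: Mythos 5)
Your proof is correct. Note, though, that the paper does not prove this statement at all: Witt's formula is quoted as a classical result (with a citation), and the text merely observes that Theorem~\ref{thm:upbs} is a restatement of it. What you have supplied is the standard self-contained argument, and all the delicate points are handled properly: the stabilizer argument showing that a rotation class of a primitive word of length $n$ has exactly $n$ elements, the unique-primitive-root lemma (your Fine--Wilf application is legitimate, since $a,b\ge 2$ gives $n\ge |u|+|v|\ge |u|+|v|-\gcd(|u|,|v|)$, and the passage from period $\gcd(|u|,|v|)$ to $u=v$ via primitivity is correct), and the resulting identity $2^{n}=\sum_{d\mid n}d\,M(d)$ followed by M\"obius inversion. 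This also matches the paper's intended meaning of ``necklace'' (equivalence classes of \emph{non-periodic} words under rotation), which is essential, since the formula $\frac1n\sum_{d\mid n}\mu(n/d)2^{d}$ counts aperiodic necklaces only. Your closing remark that the formula is equivalent to $P(n)=2^{n}-Z(n)$ is exactly the relationship the paper asserts between the two theorems, so your write-up in effect proves both cited results at once; the only ingredient you invoke without proof is the Fine--Wilf theorem (and the elementary fact $\sum_{d\mid n}\mu(d)=[n=1]$ behind M\"obius inversion), which is a reasonable level of granularity here.
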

		\begin{sidewaystable}
			\centering
			\begin{tabular}{|r|c|c|c|c|c|c|c|c|}
				\hline
				\diagbox{$n$}{$q$} & 3 & 4 & 5 & 6 & 7 & 8 & 9 & 10\\
				\hline
				22 & 8/20 & 28/58 & 86/164 & 322/502 & 1288/2846 & 6594/16024 & 44922/94732 & 220544/451368\\
				\hline
				21 & 8/20 & 28/58 & 98/176 & 292/496 & 1318/3168 & 8472/18720 & 52178/108042 & 266760/504794\\
				\hline
				20 & 8/20 & 28/58 & 86/164 & 238/430 & 1478/3814 & 11670/23328 & 54990/115896 & 278696/529148\\
				\hline
				19 & 8/20 & 28/58 & 86/164 & 402/582 & 2380/4996 & 12312/26542 & 78892/410668 & 134578/351250\\
				\hline
				18 & 8/20 & 28/58 & 110/188 & 356/598 & 2070/5692 & 14456/29990 & 68288/36024 & 0/0\\
				\hline
				17 & 8/20 & 28/58 & 104/200 & 262/514 & 2850/7102 & 20516/37042 & 30486/86128 & \\
				\hline
				16 & 8/20 & 28/58 & 80/164 & 536/752 & 2908/7738 & 14230/34320 & 0/22476 & \\
				\hline
				15 & 8/20 & 28/58 & 148/226 & 578/908 & 3338/8530 & 7524/23018  &  & \\
				\hline
				14 & 8/20 & 28/58 & 112/244 & 774/1270 & 4442/9868 & 0/5116 &  & \\
				\hline
				13 & 8/20 & 28/58 & 120/250 & 1396/2076 & 1736/5774 &  &  & \\
				\hline
				12 & 8/20 & 28/58 & 158/282 & 1048/2090 & 0/1638 &  &  & \\
				\hline
				11 & 8/20 & 28/58 & 384/564 & 576/1398  &  &  &  & \\
				\hline
				10 & 8/20 & 34/64 & 244/588 & 0/344  &  &  &  & \\
				\hline
				9 & 8/20 & 48/78 & 112/406 &  &  &  &  & \\
				\hline
				8 & 8/20 & 82/130 & 0/98 &  &  &  &  & \\
				\hline
				7 & 10/22  & 38/98  &  &  &  &  &  & \\
				\hline
				6 & 14/26 & 0/30  &  &  &  &  &  & \\
				\hline
				5 & 8/24  &  &  &  &  &  &  & \\
				\hline
				4 & 0/8  &  &  &  &  &  &  & \\
				\hline
			\end{tabular}
			\caption{Proportions $r_q(n)/s_q(n)$ of right-inextendible binary words of automatic complexity $q$ and length $n$.}
			\label{tab:prop}
		\end{sidewaystable}
		\begin{df}
			We define the set $S_q(n) = \{x \in \{0,1\}^n : A(x) = q\}$ and $s_q(n) = |S_q(n)|$.
		\end{df}

		\begin{df}
			Given an automaton, $G$, whose set of states is $Q$,
			we define a \emph{detour} to be a pair of finite non-trivial sequences of states, $\alpha, \beta \in Q^*$, such that
			$\alpha(0) = \beta(0)$, $\alpha(|\alpha|-1) = \beta(|\beta|-1)$ and $\alpha \neq \beta$.
			We call a detour \emph{minimal} if $\{ \alpha(i) : 0 < i < |\alpha|-1 \} \cap \{ \beta(i) : 0 < i < |\beta|-1 \} = \emptyset$.
		\end{df}

		Consider an automaton with a single cycle (Figure \ref{fig:cycle}).
		Suppose the automaton has $i$ states before the cycle and $\ell$ states after the cycle (which implies that there are $q-(i+\ell)$ states within the cycle).
		We now obtain a formula for the limit of the number of binary words of given complexity $q$.

		\begin{thm}
			\label{thm:total}\label{achilles}
			$s_q$ is eventually constant,
			with limiting value
			\[
				\sum_{\substack{i,\ell\geq 0\\ i+\ell<q}} 2^{(i-1)^+}\cdot [2^{q-(i+\ell)} - Z(q-(i+\ell))]\cdot2^{(\ell-1)^+},
			\]
			where $Z$ was defined in Theorem \ref{thm:upbs}
			and
			\[
				x^+=\max\{x,0\}\textrm{.}
			\]
		\end{thm}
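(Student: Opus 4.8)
The plan is, for $n$ large relative to $q$, to force every minimal witness into the single-cycle shape of Figure~\ref{fig:cycle} and then to count the words that arise.

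\emph{Structural reduction.} Let $A_N(x)=q$, $|x|=n$, witnessed by an automaton $M$ with $q$ states and a unique accepting walk $W$ of length $n$. Discarding the states and edges of $M$ not used by $W$ cannot increase the state count, so by minimality I may assume $M$ is exactly the support of $W$. Since $n$ is large, some state $v$ is visited by $W$ at least $n/q$ times; write $W=P\,E_1\cdots E_{t-1}\,S$ with $P$ from the start to the first visit of $v$, $S$ from the last visit of $v$ to an accepting state, and each $E_m$ a closed walk at $v$ not revisiting $v$. If $E_a\neq E_b$ for some $a<b$, interchanging $E_a$ and $E_b$ yields a second accepting walk of length $n$, a contradiction; hence all $E_m$ coincide with one excursion $E$ and $W=P\,E^{t-1}\,S$. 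The same interchange applied to any state repeated inside $E$ shows $E$ is a simple cycle, and a loop-trading variant (extra traversals of one cycle balanced by fewer traversals of another, with matched total length) shows that the support has no second cycle and that $P$ and $S$ are simple paths meeting $E$ only at their endpoints. Thus $M$ is a lollipop: an in-path of $i$ edges into a simple cycle on $c$ states, plus an out-path of $\ell$ edges leaving the cycle, with $i+c+\ell=q$; equivalently $x=u\,v^{k}v'\,w$, where $u,v,w$ are the labels read along the in-path, one cycle traversal, and the out-path, $|u|=i$, $|v|=c$, $|w|=\ell$, $v'$ is the length-$r$ prefix of $v$, and $r\equiv n-i-\ell\pmod c$ with $0\le r<c$. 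Conversely any such lollipop has a unique accepting walk of length $n$ (the in-path is forced, after which the number of cycle steps and the final out-path are determined by $n$), so it witnesses $A_N(x)\le q$.

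\emph{Normalization and counting.} A non-primitive cycle label $v=\tilde v^{\,m}$ can be replaced by $\tilde v$, lowering the state count, so $v$ is primitive in a minimal witness, contributing the factor $2^{q-(i+\ell)}-Z(q-(i+\ell))$ via Theorem~\ref{thm:upbs}. One can prepend a letter to the periodic core $v^{k}v'$ of $x$ exactly when $u_i=v_c$, and then a rotation of the cycle absorbs it and saves a state; so $u_i\neq v_c$ when $i\ge1$, leaving $2^{i-1}$ choices of $u$ (and the unique empty $u$ when $i=0$), i.e.\ $2^{(i-1)^+}$ in both cases. Symmetrically $w_1\neq v_{r+1}$ (indices mod $c$) when $\ell\ge1$, leaving $2^{(\ell-1)^+}$ choices of $w$; the forbidden letter $v_{r+1}$ depends on $n$ but its count does not. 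The crux is a rigidity lemma: for $n$ large, $x$ admits at most one primitive reduced lollipop decomposition, and its size is exactly $A_N(x)$. Indeed the core $v^{k}v'$ is the unique factor of $x$ of length $\Theta(n)$ having least period $c$ (uniqueness for $n$ large by Fine--Wilf), and it determines $c$, then $v$ as its length-$c$ prefix, then $u$ and $w$ as the two flanking words; the conditions $u_i\neq v_c$ and $w_1\neq v_{r+1}$ are precisely what make the core inextensible, so there is no competing decomposition and no shortening. Hence $(i,\ell,u,v,w)\mapsto x$ is a bijection from valid data onto $\{x:|x|=n,\ A_N(x)=q\}$, and for all $n$ beyond an explicit threshold (polynomial in $q$)
\[
	s_q(n)=\sum_{\substack{i,\ell\ge 0\\ i+\ell<q}} 2^{(i-1)^+}\bigl[2^{q-(i+\ell)}-Z(q-(i+\ell))\bigr]2^{(\ell-1)^+},
\]
which is independent of $n$, as claimed.

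\emph{Main obstacle.} I expect the rigidity lemma to be the hard part, since it must control both boundaries of the periodic core at once: one has to show simultaneously that no minimal witness is non-reduced (otherwise there is a shortening) and that a reduced decomposition is unique, so that the factors $2^{(i-1)^+}$ and $2^{(\ell-1)^+}$ neither over- nor under-count and no word is charged to two different shapes or to a complexity below its true value. The degenerate configurations---$i=0$, $\ell=0$, $c=1$, and $c\mid n-i-\ell$ (where entry and exit coincide, or the core is itself a power of $v$)---are where the bookkeeping is most delicate, and also where one must make explicit how large $n$ must be for both the structural reduction and the ``long core'' to apply.
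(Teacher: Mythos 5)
Your proposal is correct in outline and ends with exactly the paper's bookkeeping: a single-cycle (``lollipop'') witness, a primitive cycle word contributing $2^{q-(i+\ell)}-Z(q-(i+\ell))$ by Theorem \ref{thm:upbs}, and the entry/exit constraints $u_i\neq v_c$, $w_1\neq v_{r+1}$ contributing $2^{(i-1)^+}$ and $2^{(\ell-1)^+}$, with the forbidden exit letter depending on $n\bmod c$ but not its count. Where you genuinely diverge is the structural reduction: the paper classifies witnesses by their minimal detours, enumerating the twelve two-detour configurations of Figure \ref{capn} and arguing case by case that either uniqueness of the accepting path fails for large $n$ or the automaton reduces to the one-detour shape of Figure \ref{fig:cycle}; you instead work directly on the unique accepting walk, choosing a state visited at least $n/q$ times, swapping excursions to show they all coincide with one simple cycle, and trading loops of matched total length to exclude a second cycle. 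This exchange argument replaces the paper's case enumeration uniformly, at the cost of care in degenerate alignments --- note your claim that $P$ and $S$ meet $E$ only at their endpoints is not literally true (the tail may run partway around the cycle when the exit state differs from your base state), though your parametrization $x=uv^kv'w$ with $v'$ a prefix of $v$ already absorbs this. Your second difference is the rigidity lemma: the paper derives the expression essentially as an upper bound per shape and then asserts it as the limiting value, without explicitly proving that distinct data $(i,\ell,u,v,w)$ yield distinct words or that every word of this form has complexity exactly $q$; your Fine--Wilf argument (the long core is the unique long factor of small least period, the boundary constraints make it maximal, hence the decomposition and $A_N(x)=i+c+\ell$ are forced) supplies precisely this exactness, so it is a strengthening of the paper's argument rather than a deviation, and you are right that it, together with the $i=0$, $\ell=0$, $c=1$ degeneracies, is where the remaining detail lies.
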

		\begin{proof}
			Consider an arbitrary automaton $G$ with $q$ states.
			There are a finite number of such automata.
			We will prove that unless $G$ has at most one minimal detour,
			there is an $N$ such that, for all $n \geq N$, $G$ cannot accept a unique word of length $n$.

			We begin with the observation that we may assume that $G$ has a unique initial state and a unique accepting state.

			If $G$ has at most one detour, then $G$ has one of the the following forms.

			\begin{center}
				\begin{tabular}{c c c}
					\xy 
						(0,0)*+[o]=<10pt>\hbox{}*\frm{o}="start";
						(50,0)*+[o]=<10pt>\hbox{}*\frm{oo}="end";
						(37.5,0)*+[o]=<5pt>\hbox{}*\frm{*}="p0.a";
						(12.5,0)*+[o]=<5pt>\hbox{}*\frm{*}="p0.b";
						(25.0,12.5)*+{}="p0.c";
						"p0.a";"p0.b"**\crv{"p0.c"} ?>*\dir2{>} ?*_!/-7pt/{_{ k }};
						"start";"p0.b"**\dir{-} ?>*\dir2{>} ?*_!/-7pt/{_{ i }};
						"p0.b";"p0.a"**\dir{-} ?>*\dir2{>} ?*_!/-7pt/{_{ j }};
						"p0.a";"end"**\dir{-} ?>*\dir2{>} ?*_!/-7pt/{_{ \ell }};
					\endxy
					&\hspace{2em}
					&\xy 
						(0,0)*+[o]=<10pt>\hbox{}*\frm{o}="start";
						(50,0)*+[o]=<10pt>\hbox{}*\frm{oo}="end";
						(12.5,0)*+[o]=<5pt>\hbox{}*\frm{*}="p0.a";
						(37.5,0)*+[o]=<5pt>\hbox{}*\frm{*}="p0.b";
						(25.0,12.5)*+{}="p0.c";
						"p0.a";"p0.b"**\crv{"p0.c"} ?>*\dir2{>} ?*_!/7pt/{_{ k }};
						"start";"p0.a"**\dir{-} ?>*\dir2{>} ?*_!/-7pt/{_{ i }};
						"p0.a";"p0.b"**\dir{-} ?>*\dir2{>} ?*_!/-7pt/{_{ j }};
						"p0.b";"end"**\dir{-} ?>*\dir2{>} ?*_!/-7pt/{_{ \ell }};
					\endxy
				\end{tabular}
			\end{center}

			\noindent If $G$ is of the type on the right and $G$ accepts a unique word $\sigma$ of length $n$,
			then any accepting path for $\sigma$ either
			uses the $k$ states that comprise the top path of the detour, or
			uses the $j$ states that comprise the bottom path, but no both.
			Thus, if both $k$ and $j$ are non-zero, there is an automaton with fewer states that accepts only $\sigma$ among all words of length $n$.
			We conclude that in the case of automata with at most one minimal detour, we need only consider ones of the form on the left.

			Now, we consider the possibilities for automata with at least two distinct minimal detours.
			\renewcommand{\theequation}{\arabic{equation}}
			\begin{figure}
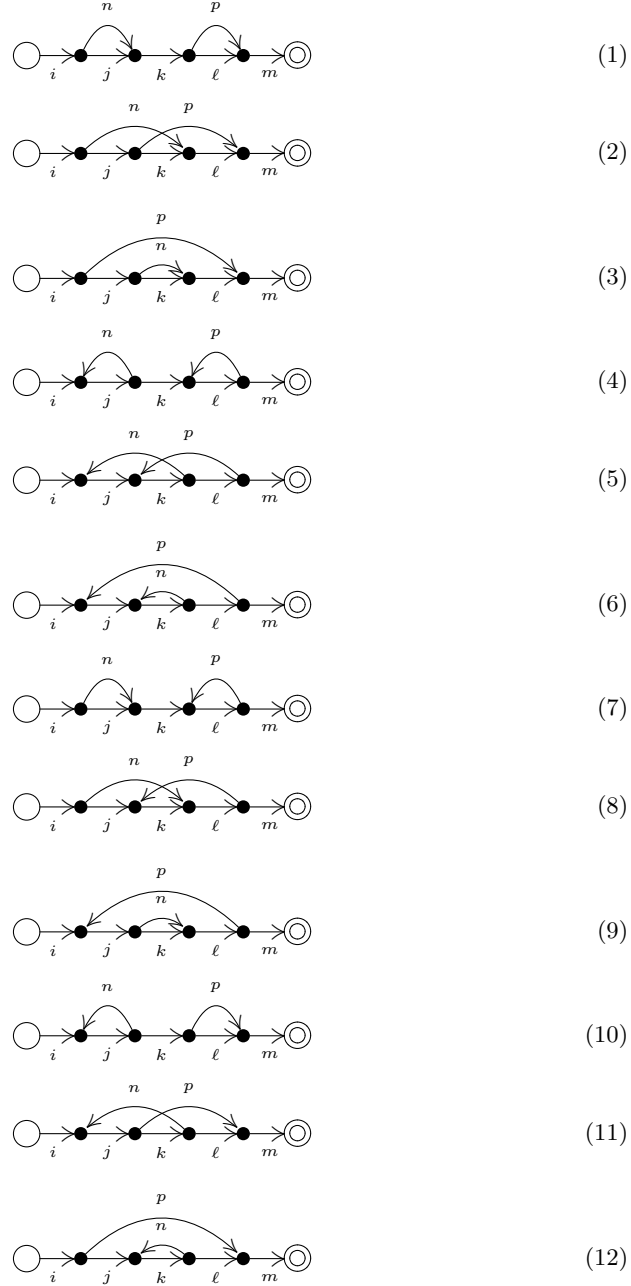

				\centering
				\begin{eqnarray}
					\xy
					(0,0)*+[o]=<10pt>\hbox{}*\frm{o}="start";
					(36,0)*+[o]=<10pt>\hbox{}*\frm{oo}="end";
					(7.2,0)*+[o]=<5pt>\hbox{}*\frm{*}="p0.a";
					(14.4,0)*+[o]=<5pt>\hbox{}*\frm{*}="p0.b";
					(10.8,8)*+{}="p0.c";
					(21.6,0)*+[o]=<5pt>\hbox{}*\frm{*}="p1.a";
					(28.8,0)*+[o]=<5pt>\hbox{}*\frm{*}="p1.b";
					(25.2,8)*+{}="p1.c";
					"p0.a";"p0.b"**\crv{"p0.c"} ?>*\dir2{>} ?*_!/7pt/{_{ n }};
					"p1.a";"p1.b"**\crv{"p1.c"} ?>*\dir2{>} ?*_!/7pt/{_{ p }};
					"start";"p0.a"**\dir{-} ?>*\dir2{>} ?*_!/-7pt/{_{ i }};
					"p0.a";"p0.b"**\dir{-} ?>*\dir2{>} ?*_!/-7pt/{_{ j }};
					"p0.b";"p1.a"**\dir{-} ?>*\dir2{>} ?*_!/-7pt/{_{ k }};
					"p1.a";"p1.b"**\dir{-} ?>*\dir2{>} ?*_!/-7pt/{_{ \ell }};
					"p1.b";"end"**\dir{-} ?>*\dir2{>} ?*_!/-7pt/{_{ m }};
					\endxy
					\label{a}
					\\
					\label{b}
					\xy
					(0,0)*+[o]=<10pt>\hbox{}*\frm{o}="start";
					(36,0)*+[o]=<10pt>\hbox{}*\frm{oo}="end";
					(7.2,0)*+[o]=<5pt>\hbox{}*\frm{*}="p0.a";
					(21.6,0)*+[o]=<5pt>\hbox{}*\frm{*}="p0.b";
					(14.4,7.2)*+{}="p0.c";
					(14.4,0)*+[o]=<5pt>\hbox{}*\frm{*}="p1.a";
					(28.8,0)*+[o]=<5pt>\hbox{}*\frm{*}="p1.b";
					(21.6,7.2)*+{}="p1.c";
					"p0.a";"p0.b"**\crv{"p0.c"} ?>*\dir2{>} ?*_!/7pt/{_{ n }};
					"p1.a";"p1.b"**\crv{"p1.c"} ?>*\dir2{>} ?*_!/7pt/{_{ p }};
					"start";"p0.a"**\dir{-} ?>*\dir2{>} ?*_!/-7pt/{_{ i }};
					"p0.a";"p1.a"**\dir{-} ?>*\dir2{>} ?*_!/-7pt/{_{ j }};
					"p1.a";"p0.b"**\dir{-} ?>*\dir2{>} ?*_!/-7pt/{_{ k }};
					"p0.b";"p1.b"**\dir{-} ?>*\dir2{>} ?*_!/-7pt/{_{ \ell }};
					"p1.b";"end"**\dir{-} ?>*\dir2{>} ?*_!/-7pt/{_{ m }};
					\endxy
					\\
					\label{c}
					\xy
					(0,0)*+[o]=<10pt>\hbox{}*\frm{o}="start";
					(36,0)*+[o]=<10pt>\hbox{}*\frm{oo}="end";
					(14.4,0)*+[o]=<5pt>\hbox{}*\frm{*}="p0.a";
					(21.6,0)*+[o]=<5pt>\hbox{}*\frm{*}="p0.b";
					(18.0,3.6)*+{}="p0.c";
					(7.2,0)*+[o]=<5pt>\hbox{}*\frm{*}="p1.a";
					(28.8,0)*+[o]=<5pt>\hbox{}*\frm{*}="p1.b";
					(18.0,10.8)*+{}="p1.c";
					"p0.a";"p0.b"**\crv{"p0.c"} ?>*\dir2{>} ?*_!/7pt/{_{ n }};
					"p1.a";"p1.b"**\crv{"p1.c"} ?>*\dir2{>} ?*_!/7pt/{_{ p }};
					"start";"p1.a"**\dir{-} ?>*\dir2{>} ?*_!/-7pt/{_{ i }};
					"p1.a";"p0.a"**\dir{-} ?>*\dir2{>} ?*_!/-7pt/{_{ j }};
					"p0.a";"p0.b"**\dir{-} ?>*\dir2{>} ?*_!/-7pt/{_{ k }};
					"p0.b";"p1.b"**\dir{-} ?>*\dir2{>} ?*_!/-7pt/{_{ \ell }};
					"p1.b";"end"**\dir{-} ?>*\dir2{>} ?*_!/-7pt/{_{ m }};
					\endxy
					\\
					\label{d}
					\xy
					(0,0)*+[o]=<10pt>\hbox{}*\frm{o}="start";
					(36,0)*+[o]=<10pt>\hbox{}*\frm{oo}="end";
					(14.4,0)*+[o]=<5pt>\hbox{}*\frm{*}="p0.a";
					(7.2,0)*+[o]=<5pt>\hbox{}*\frm{*}="p0.b";
					(10.8,8)*+{}="p0.c";
					(28.8,0)*+[o]=<5pt>\hbox{}*\frm{*}="p1.a";
					(21.6,0)*+[o]=<5pt>\hbox{}*\frm{*}="p1.b";
					(25.2,8)*+{}="p1.c";
					"p0.a";"p0.b"**\crv{"p0.c"} ?>*\dir2{>} ?*_!/-7pt/{_{ n }};
					"p1.a";"p1.b"**\crv{"p1.c"} ?>*\dir2{>} ?*_!/-7pt/{_{ p }};
					"start";"p0.b"**\dir{-} ?>*\dir2{>} ?*_!/-7pt/{_{ i }};
					"p0.b";"p0.a"**\dir{-} ?>*\dir2{>} ?*_!/-7pt/{_{ j }};
					"p0.a";"p1.b"**\dir{-} ?>*\dir2{>} ?*_!/-7pt/{_{ k }};
					"p1.b";"p1.a"**\dir{-} ?>*\dir2{>} ?*_!/-7pt/{_{ \ell }};
					"p1.a";"end"**\dir{-} ?>*\dir2{>} ?*_!/-7pt/{_{ m }};
					\endxy
					\\
					\label{e}
					\xy
					(0,0)*+[o]=<10pt>\hbox{}*\frm{o}="start";
					(36,0)*+[o]=<10pt>\hbox{}*\frm{oo}="end";
					(21.6,0)*+[o]=<5pt>\hbox{}*\frm{*}="p0.a";
					(7.2,0)*+[o]=<5pt>\hbox{}*\frm{*}="p0.b";
					(14.4,7.2)*+{}="p0.c";
					(28.8,0)*+[o]=<5pt>\hbox{}*\frm{*}="p1.a";
					(14.4,0)*+[o]=<5pt>\hbox{}*\frm{*}="p1.b";
					(21.6,7.2)*+{}="p1.c";
					"p0.a";"p0.b"**\crv{"p0.c"} ?>*\dir2{>} ?*_!/-7pt/{_{ n }};
					"p1.a";"p1.b"**\crv{"p1.c"} ?>*\dir2{>} ?*_!/-7pt/{_{ p }};
					"start";"p0.b"**\dir{-} ?>*\dir2{>} ?*_!/-7pt/{_{ i }};
					"p0.b";"p1.b"**\dir{-} ?>*\dir2{>} ?*_!/-7pt/{_{ j }};
					"p1.b";"p0.a"**\dir{-} ?>*\dir2{>} ?*_!/-7pt/{_{ k }};
					"p0.a";"p1.a"**\dir{-} ?>*\dir2{>} ?*_!/-7pt/{_{ \ell }};
					"p1.a";"end"**\dir{-} ?>*\dir2{>} ?*_!/-7pt/{_{ m }};
					\endxy
					\\
					\label{f}
					\xy
					(0,0)*+[o]=<10pt>\hbox{}*\frm{o}="start";
					(36,0)*+[o]=<10pt>\hbox{}*\frm{oo}="end";
					(21.6,0)*+[o]=<5pt>\hbox{}*\frm{*}="p0.a";
					(14.4,0)*+[o]=<5pt>\hbox{}*\frm{*}="p0.b";
					(18.0,3.6)*+{}="p0.c";
					(28.8,0)*+[o]=<5pt>\hbox{}*\frm{*}="p1.a";
					(7.2,0)*+[o]=<5pt>\hbox{}*\frm{*}="p1.b";
					(18.0,10.8)*+{}="p1.c";
					"p0.a";"p0.b"**\crv{"p0.c"} ?>*\dir2{>} ?*_!/-7pt/{_{ n }};
					"p1.a";"p1.b"**\crv{"p1.c"} ?>*\dir2{>} ?*_!/-7pt/{_{ p }};
					"start";"p1.b"**\dir{-} ?>*\dir2{>} ?*_!/-7pt/{_{ i }};
					"p1.b";"p0.b"**\dir{-} ?>*\dir2{>} ?*_!/-7pt/{_{ j }};
					"p0.b";"p0.a"**\dir{-} ?>*\dir2{>} ?*_!/-7pt/{_{ k }};
					"p0.a";"p1.a"**\dir{-} ?>*\dir2{>} ?*_!/-7pt/{_{ \ell }};
					"p1.a";"end"**\dir{-} ?>*\dir2{>} ?*_!/-7pt/{_{ m }};
					\endxy
					\\
					\label{g}
					\xy
					(0,0)*+[o]=<10pt>\hbox{}*\frm{o}="start";
					(36,0)*+[o]=<10pt>\hbox{}*\frm{oo}="end";
					(7.2,0)*+[o]=<5pt>\hbox{}*\frm{*}="p0.a";
					(14.4,0)*+[o]=<5pt>\hbox{}*\frm{*}="p0.b";
					(10.8,8)*+{}="p0.c";
					(28.8,0)*+[o]=<5pt>\hbox{}*\frm{*}="p1.a";
					(21.6,0)*+[o]=<5pt>\hbox{}*\frm{*}="p1.b";
					(25.2,8)*+{}="p1.c";
					"p0.a";"p0.b"**\crv{"p0.c"} ?>*\dir2{>} ?*_!/7pt/{_{ n }};
					"p1.a";"p1.b"**\crv{"p1.c"} ?>*\dir2{>} ?*_!/-7pt/{_{ p }};
					"start";"p0.a"**\dir{-} ?>*\dir2{>} ?*_!/-7pt/{_{ i }};
					"p0.a";"p0.b"**\dir{-} ?>*\dir2{>} ?*_!/-7pt/{_{ j }};
					"p0.b";"p1.b"**\dir{-} ?>*\dir2{>} ?*_!/-7pt/{_{ k }};
					"p1.b";"p1.a"**\dir{-} ?>*\dir2{>} ?*_!/-7pt/{_{ \ell }};
					"p1.a";"end"**\dir{-} ?>*\dir2{>} ?*_!/-7pt/{_{ m }};
					\endxy
					\\
					\label{h}
					\xy
					(0,0)*+[o]=<10pt>\hbox{}*\frm{o}="start";
					(36,0)*+[o]=<10pt>\hbox{}*\frm{oo}="end";
					(7.2,0)*+[o]=<5pt>\hbox{}*\frm{*}="p0.a";
					(21.6,0)*+[o]=<5pt>\hbox{}*\frm{*}="p0.b";
					(14.4,7.2)*+{}="p0.c";
					(28.8,0)*+[o]=<5pt>\hbox{}*\frm{*}="p1.a";
					(14.4,0)*+[o]=<5pt>\hbox{}*\frm{*}="p1.b";
					(21.6,7.2)*+{}="p1.c";
					"p0.a";"p0.b"**\crv{"p0.c"} ?>*\dir2{>} ?*_!/7pt/{_{ n }};
					"p1.a";"p1.b"**\crv{"p1.c"} ?>*\dir2{>} ?*_!/-7pt/{_{ p }};
					"start";"p0.a"**\dir{-} ?>*\dir2{>} ?*_!/-7pt/{_{ i }};
					"p0.a";"p1.b"**\dir{-} ?>*\dir2{>} ?*_!/-7pt/{_{ j }};
					"p1.b";"p0.b"**\dir{-} ?>*\dir2{>} ?*_!/-7pt/{_{ k }};
					"p0.b";"p1.a"**\dir{-} ?>*\dir2{>} ?*_!/-7pt/{_{ \ell }};
					"p1.a";"end"**\dir{-} ?>*\dir2{>} ?*_!/-7pt/{_{ m }};
					\endxy
					\\
					\label{i}
					\xy
					(0,0)*+[o]=<10pt>\hbox{}*\frm{o}="start";
					(36,0)*+[o]=<10pt>\hbox{}*\frm{oo}="end";
					(14.4,0)*+[o]=<5pt>\hbox{}*\frm{*}="p0.a";
					(21.6,0)*+[o]=<5pt>\hbox{}*\frm{*}="p0.b";
					(18.0,3.6)*+{}="p0.c";
					(28.8,0)*+[o]=<5pt>\hbox{}*\frm{*}="p1.a";
					(7.2,0)*+[o]=<5pt>\hbox{}*\frm{*}="p1.b";
					(18.0,10.8)*+{}="p1.c";
					"p0.a";"p0.b"**\crv{"p0.c"} ?>*\dir2{>} ?*_!/7pt/{_{ n }};
					"p1.a";"p1.b"**\crv{"p1.c"} ?>*\dir2{>} ?*_!/-7pt/{_{ p }};
					"start";"p1.b"**\dir{-} ?>*\dir2{>} ?*_!/-7pt/{_{ i }};
					"p1.b";"p0.a"**\dir{-} ?>*\dir2{>} ?*_!/-7pt/{_{ j }};
					"p0.a";"p0.b"**\dir{-} ?>*\dir2{>} ?*_!/-7pt/{_{ k }};
					"p0.b";"p1.a"**\dir{-} ?>*\dir2{>} ?*_!/-7pt/{_{ \ell }};
					"p1.a";"end"**\dir{-} ?>*\dir2{>} ?*_!/-7pt/{_{ m }};
					\endxy
					\\
					\label{j}
					\xy
					(0,0)*+[o]=<10pt>\hbox{}*\frm{o}="start";
					(36,0)*+[o]=<10pt>\hbox{}*\frm{oo}="end";
					(14.4,0)*+[o]=<5pt>\hbox{}*\frm{*}="p0.a";
					(7.2,0)*+[o]=<5pt>\hbox{}*\frm{*}="p0.b";
					(10.8,8)*+{}="p0.c";
					(21.6,0)*+[o]=<5pt>\hbox{}*\frm{*}="p1.a";
					(28.8,0)*+[o]=<5pt>\hbox{}*\frm{*}="p1.b";
					(25.2,8)*+{}="p1.c";
					"p0.a";"p0.b"**\crv{"p0.c"} ?>*\dir2{>} ?*_!/-7pt/{_{ n }};
					"p1.a";"p1.b"**\crv{"p1.c"} ?>*\dir2{>} ?*_!/7pt/{_{ p }};
					"start";"p0.b"**\dir{-} ?>*\dir2{>} ?*_!/-7pt/{_{ i }};
					"p0.b";"p0.a"**\dir{-} ?>*\dir2{>} ?*_!/-7pt/{_{ j }};
					"p0.a";"p1.a"**\dir{-} ?>*\dir2{>} ?*_!/-7pt/{_{ k }};
					"p1.a";"p1.b"**\dir{-} ?>*\dir2{>} ?*_!/-7pt/{_{ \ell }};
					"p1.b";"end"**\dir{-} ?>*\dir2{>} ?*_!/-7pt/{_{ m }};
					\endxy
					\\
					\label{k}
					\xy
					(0,0)*+[o]=<10pt>\hbox{}*\frm{o}="start";
					(36,0)*+[o]=<10pt>\hbox{}*\frm{oo}="end";
					(21.6,0)*+[o]=<5pt>\hbox{}*\frm{*}="p0.a";
					(7.2,0)*+[o]=<5pt>\hbox{}*\frm{*}="p0.b";
					(14.4,7.2)*+{}="p0.c";
					(14.4,0)*+[o]=<5pt>\hbox{}*\frm{*}="p1.a";
					(28.8,0)*+[o]=<5pt>\hbox{}*\frm{*}="p1.b";
					(21.6,7.2)*+{}="p1.c";
					"p0.a";"p0.b"**\crv{"p0.c"} ?>*\dir2{>} ?*_!/-7pt/{_{ n }};
					"p1.a";"p1.b"**\crv{"p1.c"} ?>*\dir2{>} ?*_!/7pt/{_{ p }};
					"start";"p0.b"**\dir{-} ?>*\dir2{>} ?*_!/-7pt/{_{ i }};
					"p0.b";"p1.a"**\dir{-} ?>*\dir2{>} ?*_!/-7pt/{_{ j }};
					"p1.a";"p0.a"**\dir{-} ?>*\dir2{>} ?*_!/-7pt/{_{ k }};
					"p0.a";"p1.b"**\dir{-} ?>*\dir2{>} ?*_!/-7pt/{_{ \ell }};
					"p1.b";"end"**\dir{-} ?>*\dir2{>} ?*_!/-7pt/{_{ m }};
					\endxy
					\\
					\label{l}
					\xy
					(0,0)*+[o]=<10pt>\hbox{}*\frm{o}="start";
					(36,0)*+[o]=<10pt>\hbox{}*\frm{oo}="end";
					(21.6,0)*+[o]=<5pt>\hbox{}*\frm{*}="p0.a";
					(14.4,0)*+[o]=<5pt>\hbox{}*\frm{*}="p0.b";
					(18.0,3.6)*+{}="p0.c";
					(7.2,0)*+[o]=<5pt>\hbox{}*\frm{*}="p1.a";
					(28.8,0)*+[o]=<5pt>\hbox{}*\frm{*}="p1.b";
					(18.0,10.8)*+{}="p1.c";
					"p0.a";"p0.b"**\crv{"p0.c"} ?>*\dir2{>} ?*_!/-7pt/{_{ n }};
					"p1.a";"p1.b"**\crv{"p1.c"} ?>*\dir2{>} ?*_!/7pt/{_{ p }};
					"start";"p1.a"**\dir{-} ?>*\dir2{>} ?*_!/-7pt/{_{ i }};
					"p1.a";"p0.b"**\dir{-} ?>*\dir2{>} ?*_!/-7pt/{_{ j }};
					"p0.b";"p0.a"**\dir{-} ?>*\dir2{>} ?*_!/-7pt/{_{ k }};
					"p0.a";"p1.b"**\dir{-} ?>*\dir2{>} ?*_!/-7pt/{_{ \ell }};
					"p1.b";"end"**\dir{-} ?>*\dir2{>} ?*_!/-7pt/{_{ m }};
					\endxy
				\end{eqnarray}
				\caption{The possibilities for automata with at least two distinct minimal detours.}\label{capn}
			\end{figure}
			Each of the twelve cases in Figure \ref{capn} falls into one of three cases.

			\begin{enumerate}
				\item On any accepting path, each detour can be used at most once ((\ref{a}), (\ref{b}) and (\ref{c})).
				\item On any accepting path, one of the detours can be used at most once ((\ref{g}), (\ref{h}), (\ref{j}), (\ref{k}) and (\ref{l})).
				\item There are accepting paths that use each of the detours an arbitrary number of times ((\ref{d}), (\ref{e}), (\ref{f}) and (\ref{i})).
			\end{enumerate}

			These further break down as follows:
			\begin{itemize}
				\item (\ref{a}), (\ref{d}), (\ref{g}), (\ref{j}) represent two separated cycles;
				\item (\ref{b}), (\ref{e}), (\ref{h}), (\ref{k}) represent overlapping cycles.
				\item (\ref{c}), (\ref{f}), (\ref{i}), (\ref{l}) represent nested cycles; and
			\end{itemize}

			If $G$ falls into the first case, then
			$\sigma$ is also uniquely accepted among words of length $n$ by an automaton with at most $q$ states and no detours.
			If $G$ falls into the second case, then
			$\sigma$ is uniquely accepted by an automaton with at most $q$ states and at most one detour.
			If $G$ falls into the third category, then
			there are two cycles (although they may have common transitions) which can each be traversed and independent and arbitrary number of times on an accepting path.
			Thus, for large enough $n$, the cycles can be traversed in different orders or different numbers of times and still reach an accepting state,
			thereby violating the requirement that $G$ accept exactly one word of length $n$.

			As an example of the third case, suppose that $G$ is of the type shown in (\ref{i}).
			$G$ has two independent cycles, one of length $p+j+k+\ell$ and the other of length $p+j+n+\ell$.  Let $N = i+a(p+j+k+\ell)+m = i+b(p+j+n+\ell)+m$,
			where $a,b \in \bbn$.
			There are at least two words of length $N$ that $G$ accepts, and
			for any $M \geq N$ such that $G$ accepts a word of length $M$, $G$ must accept at least two words of length $M$.

			In conclusion, we may assume our automata have at most one detour.
			Thus they consist of a chain of states, followed by a single (in general multi-state) cycle, followed by another chain.
			Let $i$ be the number of states before the cycle, $\ell$ the number of states after the cycle, so that $q-(i+\ell)$ if the number of states within the cycle.
			If the bits read within the cycle do not form a necklace, we can reduce the number of states.
			Thus there are $[2^{q-(i+\ell)} - Z(q-(i+\ell))]$ states within the cycle.
			The an upper bound for the total number of binary words with $A_N(x)=q$ is
			\[
				2^{i}\cdot 2^{\ell}\cdot[2^{q-(i+\ell)} - Z(q-(i+\ell))].
			\]
			Let $\xi$ be the bit that advances the automaton from the $i$th state to the $(i+1)$th state (i.e. the transition that takes the automaton into the cycle)
			and $\eta$ be the bit that advances that automaton from the $q-(i+\ell)$th state to the $(i+1)$th state (i.e, the transition that completes the cycle).
			If $\xi=\eta$, then it is possible to create an automaton with fewer states that accepts the same word and no other of length $n$.
			A similar consideration applies upon leaving the cycle.
			Thus, we have
			\[
				2^{(i-1)^+}\cdot [2^{q-(i+\ell)} - Z(q-(i+\ell))]\cdot 2^{(\ell-1)^+}
			\]
			possible words.	

			Finally, to conclude that $s_q(n)$ is eventually constant, note that
			while the single cycle will have to be exited at different points depending on $n$ mod $k$, where $k$ is the length of the main cycle,
			there will always be exactly one value of $n$ mod $k$ and hence exactly one automaton contributed from the cycle and the given ``head'' and ``tail'' words.
			See Figures \ref{fig1}, \ref{fig2}, and \ref{fig3} for illustrations of the cases $q=2, 3, 4$, respectively.
		\end{proof}
		\begin{rem}
			Here is perhaps a simpler view of the classification of detours in Figure \ref{capn}.
			Suppose $A$ is an NFA that uniquely accepts some word.  Now consider
			some shortest directed path $P$ from $q_0$ to the unique final state $q_f$.
			Let us say that an \emph{alternate route} is any
			simple directed path, edge-disjoint from $P$, joining two vertices of $P$.

			Suppose there are two alternate routes, $Q$ and $R$, joining
			$q_i$ and $q_j$, and $q_k$ and $q_l$, respectively. If we do not worry about
			the direction of the paths for the moment, we may assume
			$i \le j$ and $k \le l$.  Then there are three possibilities:
			\begin{enumerate}
				\item $j \le k$:  $Q$ precedes $R$;
				\item $k \le i$ and $j \le l$:  $Q$ encompasses $R$;
				\item $i \le k \le j \le l$:  $Q$ and $R$ overlap.
			\end{enumerate}
			Furthermore, for $Q$ and $R$ one can choose the direction of the edges
			independently.  This gives $3\cdot 4 = 12$ possibilities to consider.
		\end{rem}
		\begin{figure}
			\centering
			\begin{tabular}{c|c|c}
				Count & Regex & Automaton\\
				\hline
				1 & $01^*$ &
					\begin{tikzpicture}[shorten >=1pt,node distance=1.5cm,on grid,auto]
						\node[state,initial] (q_1)   {$q_1$}; 
						\node[state, accepting] (q_2)     [right=of q_1   ] {$q_2$};
						\path[->] 
							(q_1)     edge  node           {$0$}     (q_2)
							(q_2)     edge [loop above] node           {$1$} ();
					\end{tikzpicture}\\
					\hline
					2 & $0^*1$ &
					\begin{tikzpicture}[shorten >=1pt,node distance=1.5cm,on grid,auto]
						\node[state,initial] (q_1)   {$q_1$}; 
						\node[state, accepting] (q_2)     [right=of q_1   ] {$q_2$};
						\path[->] 
							(q_1)     edge  node           {$1$}     (q_2)
							(q_1)     edge [loop above] node           {$0$} ();
					\end{tikzpicture}\\
					\hline
					3 & $(01)^*$ &
					\begin{tabular}{c|c}
						\hline
						$n$ odd & $n$ even \\
					\begin{tikzpicture}[shorten >=1pt,node distance=1.5cm,on grid,auto]
						\node[state,initial] (q_1)   {$q_1$}; 
						\node[state, accepting] (q_2)     [right=of q_1   ] {$q_2$};
						\path[->] 
							(q_1)     edge [bend left]  node           {$0$}     (q_2)
							(q_2)     edge [bend left] node           {$1$} (q_1);
					\end{tikzpicture}
					&
					\begin{tikzpicture}[shorten >=1pt,node distance=1.5cm,on grid,auto]
						\node[state,initial, accepting] (q_1)   {$q_1$}; 
						\node[state] (q_2)     [right=of q_1   ] {$q_2$};
						\path[->] 
							(q_1)     edge [bend left]  node           {$0$}     (q_2)
							(q_2)     edge [bend left] node           {$1$} (q_1);
					\end{tikzpicture}
				\end{tabular}
			\end{tabular}
			\caption{
				The witnessing automata for $\lim_n s_q(n)/2=3$, $q=2$.
				The first two are used at any length $n$, whereas the bottom two are each used only for one value of $n$ mod 2,
				illustrating Theorem \ref{jeffNotesMaybe}.
			}
			\label{fig1}
		\end{figure}
		\begin{figure}
			\centering
			\begin{tabular}{c|c|c}
				Count & Regex & Automata\\
				\hline
				1--5& \makecell[l]{$001^{*}$ (shown)\\ $010^{*}$, $01^{*}0$\\
				$0^{*}10$, $0^{*}11$ }&\begin{tikzpicture}[shorten >=1pt,node distance=1.5cm,on grid,auto]
					\node[state,initial] (q_0)   {$q_0$}; 
					\node[state] (q_1)     [right=of q_0   ] {$q_1$};
					\node[state,accepting] (q_2)     [right=of q_1   ] {$q_2$};
					\path[->] 
						(q_0)     edge node           {$0$}     (q_1)
						(q_1)     edge node           {$0$} (q_2)
						(q_2)     edge [loop above] node           {$1$} ();
				\end{tikzpicture}
				\\
				\hline
				6--8& \makecell[l]{$(001)^*$ (shown)\\
				$(010)^{*}$, $(011)^{*}$} & \begin{tabular}{c|c|c}
				$n\equiv 0\mod 3$ & $n\equiv 1\mod 3$ & $n\equiv 2\mod 3$\\
				\begin{tikzpicture}[shorten >=1pt,node distance=1.5cm,on grid,auto]
					\node[state] (q_1)   {$q_1$}; 
					\node[state] (q_2)     [right=of q_1   ] {$q_2$};
					\node[state, initial below,accepting] (q_0)     [below=of q_1   ] {$q_0$};
					\path[->] 
						(q_0)     edge node           {$0$}     (q_1)
						(q_1)     edge node           {$0$} (q_2)
						(q_2)     edge node           {$1$} (q_0);
				\end{tikzpicture}
				&
				\begin{tikzpicture}[shorten >=1pt,node distance=1.5cm,on grid,auto]
					\node[state,accepting] (q_1)   {$q_1$}; 
					\node[state] (q_2)     [right=of q_1   ] {$q_2$};
					\node[state, initial below] (q_0)     [below=of q_1   ] {$q_0$};
					\path[->] 
						(q_0)     edge node           {$0$}     (q_1)
						(q_1)     edge node           {$0$} (q_2)
						(q_2)     edge node           {$1$} (q_0);
				\end{tikzpicture}
				&
				\begin{tikzpicture}[shorten >=1pt,node distance=1.5cm,on grid,auto]
					\node[state] (q_1)   {$q_1$}; 
					\node[state,accepting] (q_2)     [right=of q_1   ] {$q_2$};
					\node[state, initial below] (q_0)     [below=of q_1   ] {$q_0$};
					\path[->] 
						(q_0)     edge node           {$0$}     (q_1)
						(q_1)     edge node           {$0$} (q_2)
						(q_2)     edge node           {$1$} (q_0);
				\end{tikzpicture}
				\end{tabular}
				\\
				\hline
				9& $0(01)^*$&\begin{tabular}{c|c}
					$n$ odd & $n$ even\\
					\begin{tikzpicture}[shorten >=1pt,node distance=1.5cm,on grid,auto]
						\node[state,accepting] (q_1)   {$q_1$}; 
						\node[state] (q_2)     [right=of q_1   ] {$q_2$};
						\node[state, initial] (q_3)     [below=of q_1   ] {$q_0$};
						\path[->] 
							(q_3)     edge node           {$0$}     (q_1)
							(q_1)     edge [bend left] node           {$0$} (q_2)
							(q_2)     edge [bend left] node           {$1$} (q_1);
					\end{tikzpicture}
					&
					\begin{tikzpicture}[shorten >=1pt,node distance=1.5cm,on grid,auto]
						\node[state] (q_1)   {$q_1$}; 
						\node[state,accepting] (q_2)     [right=of q_1   ] {$q_2$};
						\node[state, initial] (q_3)     [below=of q_1   ] {$q_0$};
						\path[->] 
							(q_3)     edge node           {$0$}     (q_1)
							(q_1)     edge [bend left] node           {$0$} (q_2)
							(q_2)     edge [bend left] node           {$1$} (q_1);
					\end{tikzpicture}
				\end{tabular}\\
				\hline
				10& $(01)^*x$ &
				\begin{tabular}{c|c}
					$n$ odd & $n$ even\\
					\begin{tikzpicture}[shorten >=1pt,node distance=1.5cm,on grid,auto]
						\node[state,initial] (q_1)   {$q_0$}; 
						\node[state] (q_2)     [right=of q_1   ] {$q_1$};
						\node[state, accepting] (q_3)     [below=of q_1   ] {$q_2$};
						\path[->] 
							(q_1)     edge [bend left]  node           {$0$}     (q_2)
							(q_2)     edge [bend left] node           {$1$} (q_1)
							(q_1) edge node {$1$} (q_3);
					\end{tikzpicture}
					&
					\begin{tikzpicture}[shorten >=1pt,node distance=1.5cm,on grid,auto]
						\node[state,initial] (q_1)   {$q_0$}; 
						\node[state] (q_2)     [right=of q_1   ] {$q_1$};
						\node[state, accepting] (q_3)     [below=of q_2   ] {$q_2$};
						\path[->] 
							(q_1)     edge [bend left]  node           {$0$}     (q_2)
							(q_2)     edge [bend left] node           {$1$} (q_1)
							(q_2) edge node {$0$} (q_3);
					\end{tikzpicture}
				\end{tabular}
			\end{tabular}
			\caption{
				Automata and regular expressions witnessing $\lim_n s_q(n)/2=10$ for $q=3$.
				The exponents indicated by $*$ are not necessarily integers (so that for instance $abcd^{1.5}=abcdab$).
				The letter $x$ indicates 0 or 1, chosen so as to break a pattern.
			}
			\label{fig2}
		\end{figure}
		\begin{figure}
			\centering
			\begin{tabular}{c|c|c}
				Count & Regex & Automata\\
				\hline
				1--3 & \makecell[l]{$(001)^{*}x$ (shown),\\ $(010)^{*}x$,\\ $(011)^{*}x$} &
				\begin{tabular}{c|c|c}
					$n\equiv 0\mod 3$ & $n\equiv 1\mod 3$ & $n\equiv 2\mod 3$ \\
					\begin{tikzpicture}[shorten >=1pt,node distance=1.5cm,on grid,auto]
						\node[state,initial above] (q_0)   {$q_0$}; 
						\node[state] (q_2) [below=of q_0] {$q_2$};
						\node[state] (q_1) [right=of q_2] {$q_1$};
						\node[state,accepting] (q_3) [left=of q_2] {$q_3$};
						\path[->] 
							(q_0) edge node {$0$} (q_1)
							(q_1) edge node {$0$} (q_2)
							(q_2) edge node {$1$} (q_0)
							(q_2) edge node {$0$} (q_3);
					\end{tikzpicture}
					&
					\begin{tikzpicture}[shorten >=1pt,node distance=1.5cm,on grid,auto]
						\node[state,initial above] (q_0)   {$q_0$}; 
						\node[state] (q_2) [below=of q_0] {$q_2$};
						\node[state] (q_1) [right=of q_2] {$q_1$};
						\node[state,accepting] (q_3) [right=of q_0] {$q_3$};
						\path[->] 
							(q_0) edge node {$0$} (q_1)
							(q_1) edge node {$0$} (q_2)
							(q_2) edge node {$1$} (q_0)
							(q_0) edge node {$1$} (q_3);
					\end{tikzpicture}
					&
					\begin{tikzpicture}[shorten >=1pt,node distance=1.5cm,on grid,auto]
						\node[state,initial above] (q_0)   {$q_0$}; 
						\node[state] (q_2) [below=of q_0] {$q_2$};
						\node[state] (q_1) [right=of q_2] {$q_1$};
						\node[state,accepting] (q_3) [right=of q_0] {$q_3$};
						\path[->] 
							(q_0) edge node {$0$} (q_1)
							(q_1) edge node {$0$} (q_2)
							(q_2) edge node {$1$} (q_0)
							(q_1) edge node {$1$} (q_3);
					\end{tikzpicture}
				\end{tabular}\\
				\hline
				4--6 & \makecell[l]{$0(001)^{*}$, $0(011)^{*}$,\\ $0(101)^{*}$} & edge followed by cycle of length 3\\
				\hline
				7--8 & \makecell[l]{$(01)^{*}x0$ (shown),\\ $(01)^{*}x1$} &
				\begin{tabular}{c|c}
					$n\equiv 0\mod 2$ & $n\equiv 1\mod 2$\\
					\begin{tikzpicture}[shorten >=1pt,node distance=1.5cm,on grid,auto]
						\node[state,initial above] (q_0)   {$q_0$}; 
						\node[state] (q_1) [right=of q_0] {$q_1$};
						\node[state] (q_2) [below=of q_0] {$q_2$};
						\node[state,accepting] (q_3) [right=of q_2] {$q_3$};
						\path[->] 
							(q_0) edge [bend left] node {$0$} (q_1)
							(q_1) edge [bend left] node {$1$} (q_0)
							(q_0) edge node {$1$} (q_2)
							(q_2) edge node {$0$} (q_3);
					\end{tikzpicture}
					&
					\begin{tikzpicture}[shorten >=1pt,node distance=1.5cm,on grid,auto]
						\node[state,initial above] (q_0)   {$q_0$}; 
						\node[state] (q_1) [right=of q_0] {$q_1$};
						\node[state,accepting] (q_3) [below=of q_0] {$q_3$};
						\node[state] (q_2) [right=of q_3] {$q_2$};
						\path[->] 
							(q_0) edge [bend left] node {$0$} (q_1)
							(q_1) edge [bend left] node {$1$} (q_0)
							(q_1) edge [bend left] node {$0$} (q_2)
							(q_2) edge node {$0$} (q_3);
					\end{tikzpicture}
				\end{tabular}
					\\
				\hline
				9 & $0(01)^{*}x$ & edge followed by cycle of length 2 followed by edge\\
				\hline
				10--15 & \makecell[l]{$(0001)^{*}$ (shown),\\ $(0010)^{*}$,\\ $(0100)^{*}$, $(0011)^{*}$,\\ $(0110)^{*}$, $(0111)^{*}$} & \begin{tabular}{c|c|c|c}
					$n\equiv 0\mod 4$ & $n\equiv 1\mod 4$ & $n\equiv 2\mod 4$ & $n\equiv 3\mod 4$ \\
					\begin{tikzpicture}[shorten >=1pt,node distance=1.5cm,on grid,auto]
						\node[state,initial above,accepting] (q_0)   {$q_0$}; 
						\node[state] (q_1) [right=of q_0] {$q_1$};
						\node[state] (q_2) [below=of q_1] {$q_2$};
						\node[state] (q_3) [left=of q_2] {$q_3$};
						\path[->] 
							(q_0) edge node {$0$} (q_1)
							(q_1) edge node {$0$} (q_2)
							(q_2) edge node {$0$} (q_3)
							(q_3) edge node {$1$} (q_0);
					\end{tikzpicture}
					&
					\begin{tikzpicture}[shorten >=1pt,node distance=1.5cm,on grid,auto]
						\node[state,initial above] (q_0)   {$q_0$}; 
						\node[state,accepting] (q_1) [right=of q_0] {$q_1$};
						\node[state] (q_2) [below=of q_1] {$q_2$};
						\node[state] (q_3) [left=of q_2] {$q_3$};
						\path[->] 
							(q_0) edge node {$0$} (q_1)
							(q_1) edge node {$0$} (q_2)
							(q_2) edge node {$0$} (q_3)
							(q_3) edge node {$1$} (q_0);
					\end{tikzpicture}
					&
					\begin{tikzpicture}[shorten >=1pt,node distance=1.5cm,on grid,auto]
						\node[state,initial above] (q_0)   {$q_0$}; 
						\node[state] (q_1) [right=of q_0] {$q_1$};
						\node[state,accepting] (q_2) [below=of q_1] {$q_2$};
						\node[state] (q_3) [left=of q_2] {$q_3$};
						\path[->] 
							(q_0) edge node {$0$} (q_1)
							(q_1) edge node {$0$} (q_2)
							(q_2) edge node {$0$} (q_3)
							(q_3) edge node {$1$} (q_0);
					\end{tikzpicture}
					&
					\begin{tikzpicture}[shorten >=1pt,node distance=1.5cm,on grid,auto]
						\node[state,initial above] (q_0)   {$q_0$}; 
						\node[state] (q_1) [right=of q_0] {$q_1$};
						\node[state] (q_2) [below=of q_1] {$q_2$};
						\node[state,accepting] (q_3) [left=of q_2] {$q_3$};
						\path[->] 
							(q_0) edge node {$0$} (q_1)
							(q_1) edge node {$0$} (q_2)
							(q_2) edge node {$0$} (q_3)
							(q_3) edge node {$1$} (q_0);
					\end{tikzpicture}
				\end{tabular}\\
				\hline
				16--17 & $00(01)^{*}$, $01(10)^{*}$ & two edges followed by cycle of length 2\\
				\hline
				18--21 & \makecell[l]{$0^{*}101$, $0^{*}110$,\\ $0^{*}111$, $0^{*}100$} & loop followed by chain\\
				\hline
				22--25 & \makecell[l]{$0010^{*}$, $0001^{*}$,\\  $0110^{*}$, $0101^{*}$} & chain followed by loop\\
				\hline
				26--29 & \makecell[l]{$010^{*}1$, $001^{*}0$,\\ $01^{*}00$, $01^{*}01$} & chain of edges with a single loop near middle\\
			\end{tabular}
			\caption{Witnessing automata for $\lim_n s_q(n)/2=29$, $q=4$.
				The exponents indicated by $*$ are not necessarily integers, and the letter $x$ indicates 0 or 1, chosen so as to break a pattern.
			}\label{fig3}
		\end{figure}

		The main proviso to Theorem \ref{thm:total} may be that while \emph{the number of words with given complexity} reaches a limit,
		\emph{the set of witnessing automata} does not quite.
		To wit:
		\begin{thm}\label{jeffNotesMaybe}
			There is a $q$ such that there is no set of automata $M_1,\dots, M_{s}$ such that for all sufficiently large $n$,
			\begin{itemize}
				\item for each $i$ there is some $x$ of length $n$ such that $A_N(x)=q$ and $M_i$ witnesses the inequality $A_N(x)\le q$, and
				\item for all $x$ of length $n$, $A_N(x)=q$ iff the inequality $A_N(x)\le q$ is witnessed by one of the $M_i$.
			\end{itemize}
		\end{thm}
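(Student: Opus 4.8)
The plan is to take $q=2$ and argue that no finite family of automata can track the set of length-$n$ words of complexity $2$ as $n$ grows. First I would invoke Theorem~\ref{thm:total} with $q=2$: there it gives $s_2(n)=6$ for all large $n$, and the six contributing words are read off from the sum. The terms with $(i,\ell)\in\{(1,0),(0,1)\}$ produce the four ``chain-then-loop'' words $01^{n-1}$, $10^{n-1}$, $0^{n-1}1$, $1^{n-1}0$, while the term with $(i,\ell)=(0,0)$ (value $2^{2}-Z(2)=2$) produces the two fully alternating words $E_n$ and $F_n$, where $E_n=(01)^{n/2}$, $F_n=(10)^{n/2}$ for $n$ even and $E_n=(01)^{(n-1)/2}0$, $F_n=(10)^{(n-1)/2}1$ for $n$ odd. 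Each of these six words has complexity exactly $2$: it is not of the form $a^n$, and a two-state witness exists (the appropriately oriented two-state cycle for $E_n,F_n$, a chain-with-loop for the others).

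Next I would extract the key constraint on any member of a purported family. Suppose $M_1,\dots,M_s$ satisfy the two bullets for $q=2$, with each $M_i$ having a single accepting state (the convention in force since the proof of Theorem~\ref{thm:total}). The first bullet forces that for every large $n$ the automaton $M_i$ has \emph{precisely one} accepting path of length $n$; hence the number $c^{(i)}_n$ of accepting walks of length $n$ in $M_i$ equals $1$ for all large $n$. Since $c^{(i)}_n$ is obtained from the powers of the nonnegative integer transition-count matrix of $M_i$, and this sequence is bounded and integer-valued, the set of those powers is finite, so $c^{(i)}_n$ is eventually periodic, hence eventually identically $1$. Running through the two-state nonnegative integer matrices with bounded powers (the nilpotent ones, the rank-one idempotents, the identity, and the $2$-cycle) and pairing each with a choice of accepting state, one finds that the only ones with $c^{(i)}_n=1$ for all large $n$ whose forced length-$n$ accepting path reads a word of complexity $2$ (rather than a word $a^n$) are, up to renaming the two states and swapping $0$ and $1$, the chain-with-loop automata reading $01^{n-1}$, $10^{n-1}$, $0^{n-1}1$, $1^{n-1}0$. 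In particular the $2$-cycle is excluded, because its accepting-walk count alternates between $1$ and $0$ with the parity of $n$; thus no $M_i$ ever reads $E_n$ or $F_n$.

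It then follows that at every large $n$ the family $\{M_1,\dots,M_s\}$ witnesses only words from the four-element set $\{01^{n-1},10^{n-1},0^{n-1}1,1^{n-1}0\}$, whereas $E_n$ (say) has length $n$ and complexity $2$. This contradicts the second bullet, and proves the theorem.

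The step I expect to be the main obstacle is the classification of the eligible two-state automata, i.e.\ showing that nothing beyond the four chain-with-loop automata can belong to a family. The cleanest handle is the transition-count matrix: eventual constancy of the walk count forces bounded powers and hence membership in the short explicit list above, and a case check (which state is accepting, which edges are actually traversed by the unique length-$n$ path) pins down the word read. The $2$-cycle is the delicate case, since it is exactly where residue-dependence appears: it does have a unique accepting path, but only for one parity of $n$, so it cannot be ``on'' at all large $n$ as the first bullet requires. One must also discard the degenerate automata whose unique path reads $a^n$, since those would witness a complexity-$1$ word and so are disallowed as members of a family that is supposed to witness exactly the complexity-$2$ words.
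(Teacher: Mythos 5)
Your proposal is correct and takes essentially the same route as the paper: fix $q=2$, note that the six eventual complexity-$2$ words include the alternating ones, and observe that a two-state witness for $(01)^*$ must have its (unique) final state depend on $n\bmod 2$, so the $2$-cycle cannot belong to a fixed family satisfying the first bullet, leaving $E_n,F_n$ unwitnessed and violating the second. The paper's proof is simply a terser statement of this parity obstruction (pointing to Figure~\ref{fig1}); your transition-count-matrix classification just fills in the finite case check of two-state automata that the paper leaves implicit.
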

		\begin{proof}
			Let $q=2$. The limiting value of $s_q$ is 6 as witnessed by the patterns: $0^*1$, $01^*$, $(01)^*$.
			However, for $(01)^*$, different states will be the final state depending on the length $n$ mod 2; see Figure \ref{fig1}.
		\end{proof}
		\begin{thm}[Number of right-inextendible words]\label{thm:rtx}
			For $q\ge 1$, define a function $r_q$ by
			\[
				r_q(n) = \#\{x\in \{0,1\}^{n}\mid A_{N}(x)+1=A_N(x0)=A_N(x1)\}.
			\]
			Then $r_q$ is eventually constant, with limiting value
			\[
				\sum_{\substack{i\geq 0,\ell>0\\ i+\ell<q}} 2^{(a-1)^+}\cdot [2^{q-(i+\ell)} - Z(q-(i+\ell))]\cdot2^{\ell-1},
			\]
			where $Z(n)$ refers to the function defined in Theorem 2, and \((x-y)^+:=\max\{(x-y),0\}\).
		\end{thm}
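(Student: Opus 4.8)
The plan is to re-run the proof of Theorem~\ref{thm:total}, now keeping track of right-inextendibility. Fix $q\ge 1$ and take $n$ large. First I would record the two general inequalities $A_N(x)\le A_N(xb)\le A_N(x)+1$ for each bit $b$: the right one is immediate, and for the left one, from a $q'$-state automaton uniquely accepting $xb$ along an accepting path $p_0\cdots p_{n+1}$, relocating the accepting state to $p_n$ gives a $q'$-state automaton uniquely accepting $x$ (uniqueness of the length-$(n+1)$ accepting path forces uniqueness of the length-$n$ path to $p_n$). Hence a word $x$ with $A_N(x)=q$ is right-inextendible exactly when neither $x0$ nor $x1$ is uniquely accepted by a $q$-state automaton. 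From the structural analysis in the proof of Theorem~\ref{thm:total}, for $n$ large every word of complexity $q$ is uniquely accepted by a ``chain--cycle--chain'' automaton $G_x$ with $i$ head states, a cycle on $k=q-(i+\ell)$ states reading a primitive word $c$, and $\ell$ tail states; and --- the rigidity I will use --- $k$, the cyclic word $c$ and the head and tail words are all determined by $x$, being read off from the unique long factor of minimal period $k$ in the interior of $x$ together with the non-reducibility normalizations (the bit entering the cycle differs from the bit completing it, and likewise on exiting).

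Granting this, the claim is that for $n$ large, $x$ with $A_N(x)=q$ is right-inextendible if and only if $\ell\ge1$. If $\ell=0$, the accepting state of $G_x$ lies on the cycle and has a unique outgoing edge, labelled by the bit $b$ continuing the period, while the next cycle state has a unique incoming edge; relocating the accepting state there yields a $q$-state automaton whose length-$(n+1)$ accepting paths correspond bijectively to the length-$n$ accepting paths of $G_x$, so it uniquely accepts $xb$, giving $A_N(xb)=q$: $x$ is not right-inextendible. Conversely, suppose $\ell\ge1$ but $A_N(xb)=q$ for some $b$. Then $xb$ is uniquely accepted by a chain--cycle--chain automaton; we may assume its tail is nonempty, for otherwise relocating its accepting state one step back along its cycle would give $x$ an empty-tail witness, contradicting $\ell\ge1$. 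Its cycle reads a primitive word $c'$ of length $k'$, and a long power of $c'$ occurs in $xb$; since the tail has length $<q$, for $n$ large this occurrence lies within the first $n$ coordinates, i.e.\ within $x$, forcing $k'=k$, $c'=c$ and the head to be that of $G_x$. But $\ell\ge1$ means the tail of $G_x$ already breaks the period, so the tail of the witness for $xb$ is the tail of $G_x$ followed by $b$, of length $\ell+1$; that witness then has at least $i+k+(\ell+1)=q+1$ states, a contradiction. So $A_N(x0)$ and $A_N(x1)$ both exceed $q$.

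It follows that for $n$ large the right-inextendible words of complexity $q$ are exactly the complexity-$q$ words whose (unique) decomposition has $\ell\ge1$, so the limiting value of $r_q$ is that of $s_q$ in Theorem~\ref{thm:total} with the $\ell=0$ terms removed; since $(\ell-1)^+=\ell-1$ for $\ell\ge1$, this is precisely
\[
	\sum_{\substack{i\geq 0,\ \ell>0\\ i+\ell<q}} 2^{(i-1)^+}\cdot\big[2^{q-(i+\ell)} - Z(q-(i+\ell))\big]\cdot 2^{\ell-1},
\]
and eventual constancy of $r_q$ follows from the same observation as for $s_q$: for fixed head word, primitive cyclic word and tail word, exactly one residue of $n$ modulo $k$ --- hence exactly one automaton and one word --- is contributed for each large $n$. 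The main obstacle is the rigidity invoked above: showing that for all sufficiently large $n$ the data $(i,k,\ell,\text{head},\text{cycle},\text{tail})$ is genuinely a function of $x$, and that appending a bit to a word with $\ell\ge1$ merely extends its tail. This is a combinatorics-on-words argument about maximal periodic factors of long words, together with the non-reducibility normalizations already used for Theorem~\ref{thm:total}; once it is in place, the rest is the bookkeeping carried out there.
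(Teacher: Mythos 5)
Your argument is essentially the paper's: reduce via Theorem~\ref{thm:total} to single-cycle (chain--cycle--chain) automata, show that $\ell=0$ forces $A_N(xb)=A_N(x)$ for the bit $b$ that continues the cycle, and obtain the stated value by deleting the $\ell=0$ terms from the sum of Theorem~\ref{thm:total} (noting $2^{(\ell-1)^+}=2^{\ell-1}$ for $\ell\ge 1$; the exponent $(a-1)^+$ in the statement is a typo for $(i-1)^+$, which you silently and correctly fix). In fact you go further than the published proof, which consists \emph{only} of the necessity direction ($\ell=0$ implies $x$ is extendible, hence right-inextendibility requires $\ell>0$) and leaves implicit both the monotonicity $A_N(x)\le A_N(xb)$ --- which you prove by relocating the accepting state --- and the converse, namely that every complexity-$q$ word with $\ell>0$ really is right-inextendible, without which the formula is only an upper bound on the limit of $r_q$.

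The one step you leave open is the rigidity claim: that for all large $n$ the data (head, primitive cycle word, tail) is determined by $x$, so that a $q$-state chain--cycle--chain witness for $xb$ would have to carry the tail of the witness for $x$ extended by $b$ and hence need $q+1$ states. As a standalone proof this is a genuine missing lemma --- it requires a combinatorics-on-words argument (Fine--Wilf type: the head and tail have total length $<q$ while the periodic factor has length $>n-q$, so two competing decompositions would force a common refinement of the periods, contradicting primitivity, and the non-reducibility normalizations then pin down where the run begins and ends). But note that the paper does not supply this either; its proof simply stops after the $\ell=0$ case. So, modulo that lemma, which the paper also omits, your proposal is correct and if anything more complete than the paper's own argument.
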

		\begin{proof}
			Let $x$ be a binary word such that its accepting automaton has a single cycle, as in Figure \ref{fig:cycle}.
			As shown in Theorem \ref{thm:total}, we need only consider this particular case.
			Let $\ell$ be the number of states between the cycle and the accepting state of the automaton. 
			\begin{figure}[H]
				\centering
				\includegraphics[scale=0.5]{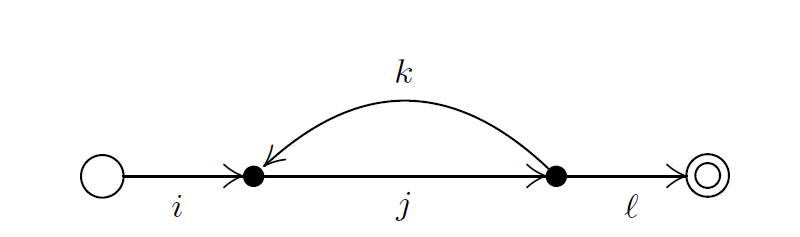}
				\caption{Schematic of an automaton with a single cycle.}
				\label{fig:cycle}
			\end{figure}
			Suppose $\ell=0$.
			Then the accepting state must be one of the states within the cycle.
			Without loss of generality, suppose the path out of the accepting state is triggered by a 0 input.
			Then $x0$ must have the same automatic complexity as $x$, as
			appending 0 to $x$ does not require the addition of any additional states, and $x$ is thus not inextendible.
			Thus, for a word to be inextendible, it is necessary that $\ell>0$.
		\end{proof}
		\begin{thm}\label{372final}
			$s_q(n)$ is eventually bounded by $2^{q-2}\left(\frac{q(q+5)}2 + 1\right)$.
		\end{thm}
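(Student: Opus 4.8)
\noindent\emph{Proof proposal.} The plan is to derive the bound directly from the exact limiting value supplied by Theorem~\ref{thm:total}, keeping only the trivial estimate that discards the periodicity correction. Since $Z(c)$ counts a set of binary words (the periodic ones of length $c$; see Theorem~\ref{thm:upbs}), we have $Z(c)\ge 0$ and hence $2^{c}-Z(c)\le 2^{c}$. Writing $c=q-(i+\ell)$ for the length of the cycle, the eventual value of $s_q$ is therefore at most
\[
\Sigma_q:=\sum_{\substack{i,\ell\ge 0\\ i+\ell<q}} 2^{(i-1)^+}\cdot 2^{\,q-(i+\ell)}\cdot 2^{(\ell-1)^+}.
\]
Because $s_q(n)$ \emph{equals} the limiting value of Theorem~\ref{thm:total} for all large $n$, any upper bound on $\Sigma_q$ is, eventually, an upper bound on $s_q(n)$; so it suffices to show $\Sigma_q\le 2^{q-2}\bigl(\tfrac{q(q+5)}2+1\bigr)$.

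Next I would collect the terms of $\Sigma_q$ according to the total ``head plus tail'' length $m:=i+\ell$, which ranges over $0\le m\le q-1$. For each $m$ set $T(m):=\sum_{i+\ell=m,\ i,\ell\ge0} 2^{(i-1)^+}2^{(\ell-1)^+}$, so that $\Sigma_q=\sum_{m=0}^{q-1}2^{\,q-m}\,T(m)$. The one point requiring care is the behaviour of $(\cdot)^{+}$ at the ends of the range: for $i=0$ (and symmetrically for $\ell=0$) the factor is $2^{0}$, not $2^{-1}$. Thus for $m\ge 1$ the two boundary terms $i\in\{0,m\}$ each contribute $2^{m-1}$, whereas each of the $m-1$ interior terms $1\le i\le m-1$ contributes $2^{i-1}\cdot2^{m-i-1}=2^{m-2}$; hence
\[
T(m)=2\cdot 2^{m-1}+(m-1)2^{m-2}=2^{m-2}(m+3)\qquad(m\ge 1),
\]
while the single-cycle case $m=0$ gives $T(0)=1$.

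Finally I would substitute back, isolating the exceptional term $m=0$:
\[
\Sigma_q=2^{q}T(0)+\sum_{m=1}^{q-1}2^{\,q-m}\,T(m)=2^{q}+2^{q-2}\sum_{m=1}^{q-1}(m+3),
\]
since $2^{\,q-m}\cdot 2^{m-2}(m+3)=2^{q-2}(m+3)$. Evaluating $\sum_{m=1}^{q-1}(m+3)=\tfrac{q(q-1)}2+3(q-1)$ and simplifying $2^{q}+2^{q-2}\bigl(\tfrac{q(q-1)}2+3(q-1)\bigr)$ turns it into exactly $2^{q-2}\bigl(\tfrac{q(q+5)}2+1\bigr)$, which is the claimed bound. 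The only genuinely delicate step is the bookkeeping of the $(\cdot)^{+}$ exponents at $i=0$ and $\ell=0$ and of the exceptional value $T(0)$; everything else is routine arithmetic. I also expect the coarse estimate $2^{c}-Z(c)\le 2^{c}$ to be sharp enough that the bound is attained in small cases (e.g.\ $q=1$, where $s_q(n)=2$ for $n\ge 1$), while there is genuine slack for larger $q$ coming from the strictly positive counts $Z(c)$, $c\ge 2$.
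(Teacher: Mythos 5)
Your proposal is correct and follows essentially the same route as the paper: discard the $Z$-term via $2^{c}-Z(c)\le 2^{c}$ and then evaluate the remaining sum exactly by tracking where $(\cdot)^{+}$ degenerates, i.e.\ the cases $i=0$ and/or $\ell=0$ (the paper organizes this by those four truth-value cases rather than by $m=i+\ell$, but the arithmetic and the resulting closed form $2^{q-2}\bigl(\tfrac{q(q+5)}2+1\bigr)$ are identical).
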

		\begin{proof}
			By Theorem \ref{thm:total}, we can upper bound the sum by
			\[
				\sum_{i,\ell\ge 0, i+\ell<q} 2^q = {q+1\choose 2} 2^q.
			\]
			In fact, by considering the four possible truth values for the cases $i=0$, $\ell=0$, we get the upper bound
			\[
				\sum_{i=\ell=0}2^q + \sum_{i\ell=0, i+\ell>0}2^{q-1} + \sum_{i>0,\ell>0}2^{q-2} = 2^q + 2(q-1)2^{q-1} + {q-1\choose 2}2^{q-2}
			\]
			\[
				= 2^{q-2}\left( 4q+{q-1\choose 2}\right) = 2^{q-2}\left(\frac{q(q+5)}2 + 1\right).
			\]
		\end{proof}
		\begin{rem}
			A comparison of $s_q$ with the bound in Theorem \ref{372final} can be done using the computer code in Figure \ref{python}.
			The number in the title of this section was calculated using that Python script and using a table of values of $Z$ from the OEIS database.
			Table \ref{geitost} shows an initial segment of the resulting sequence.
			There we count only words starting with 0, so that the full number would be twice that,
			matching the impression that $\lim_n s_3(n) = 20$ given by Table \ref{tab:prop}.
		\end{rem}
		\begin{table}
			\centering
			\begin{tabular}{r|r||r|r}
				$q$ & $\lim_n s_q(n)/2$ & $q$ & $\lim_n s_q(n)/2$ \\
				\hline
				1 & 1& 21 & 64 594 576\\

				2 & 3&22 & 141 046 655\\
				3 & 10&23 & 306 858 874\\
				4 & 29&24 & 665 342 837\\
				5 & 82&25 & 1 438 134 475\\
				6 & 215&26 & 3 099 548 927\\
				7 & 556&27 & 6 662 442 946\\
				8 & 1 385&28 & 14 285 118 725\\
				9 & 3 391&29 & 30 557 828 119\\
				10 & 8 135&30 & 65 225 030 201\\
				11 & 19 261&31 & 138 937 277 596\\
				12 & 44 963&32 & 295 385 810 819\\
				13 & 103 906&33 & 626 867 939 224\\
				14 & 237 719&34 & 1 328 075 901 017\\
				15 & 539 458&35 & 2 809 126 944 436\\
				16 & 1 214 993&36 & 5 932 793 909 801\\
				17 & 2 718 760&37 & 12 511 847 996 740\\
				18 & 6 047 426&38 & 26 350 575 690 893\\
				19 & 13 380 766&			39 & 55 423 630 773 538\\
				20 & 29 463 632&40 & 116 429 658 505 697\\
			\end{tabular}
			\caption{The number of binary words $0x$ of length $n$ with $A_N(0x)=q$, for sufficiently large $n$.
				The value for $q=7$ is surprisingly small when comparing with Table \ref{tab:prop}.
			}\label{geitost}
		\end{table}

	\bibliographystyle{splncs04}
	\bibliography{ninth-draft}
	\appendix
	\lstset{language=Python}
	\begin{figure}
		\begin{lstlisting}
oeisValues = [
	0, 0, 2, 2, 4, 2, 10, 2, 16, 8, 34, 2, 76, 2, 130, 38, 256, 2,
	568, 2, 1036, 134, 2050, 2, 4336, 32, 8194, 512, 16396, 2, 33814,
	2, 65536, 2054, 131074, 158, 266176, 2, 524290, 8198, 1048816, 2,
	2113462, 2, 4194316, 33272, 8388610, 2, 16842496, 128, 33555424
]# from http://oeis.org/A152061
def Z(n): # number of periodic binary strings of length n
	return oeisValues[n]
def plus(k):
	if k<0:
		return 0
	return k
def limS(q): #limitingNumberOfStringsWithNFAComplexity(q):
	num = 0
	print "."
	for i in range(0, q):
		for l in range(0, q):
			if i+l<q:
				left = 2**(plus(i-1))
				right = 2**(plus(l-1))
				middle = (2**(q-(i+l))-Z(q-(i+l)))
				num += left*middle*right
	return num
def answer(q):
	bound = 2**(q-2)*(1+q*(q+5)/2)
	print "q=" + str(q) + ", " + str(limS(q)),
	print ", bound = " + str(bound) + ", ",
	print str(limS(q)/float(bound))
for q in range(3, len(oeisValues)):
	answer(q)
		\end{lstlisting}
		\caption{Python code which when run hints at the sharpness of Theorem \ref{372final}.}\label{python}
	\end{figure}
\end{document}